\documentclass[10pt]{article}
\usepackage{fullpage}
\usepackage{authblk}
\usepackage{graphicx}
\usepackage{amsmath}
\usepackage{amsfonts}
\usepackage{amssymb}
\usepackage{amsthm}
\usepackage{mathrsfs}
\usepackage{bm}
\usepackage{mathtools}
\usepackage{enumitem}
\usepackage{algorithm2e}

\usepackage{optidef}

\usepackage{url}
\usepackage{hyperref}


\usepackage{tcolorbox}
\newtcolorbox{wbox}
{
	colback  = white,
}

\usepackage{caption}
\usepackage{subcaption}
\captionsetup{compatibility=false} 

\newtheorem{theorem}{Theorem}
\newtheorem{lemma}[theorem]{Lemma}

\newtheorem{example}[theorem]{Example}
\newtheorem{alg}[theorem]{Mechanism}

\newtheorem*{claim*}{Claim}
\newtheorem*{remark*}{Remark}

\newtheorem*{definition*}{Definition}


\makeatletter
\renewcommand\section{%
  \@startsection{section}{1}
                {\z@}%
                {-3.5ex \@plus -1ex \@minus -.2ex}%
                {2.3ex \@plus.2ex}%
                {\large\bfseries}
}
\renewcommand\subsection{%
  \@startsection{subsection}{2}
                {\z@}%
                {-3.25ex\@plus -1ex \@minus -.2ex}%
                {1sp}
                {\normalsize\bfseries}
}
\renewcommand\subsubsection{%
  \@startsection{subsubsection}{3}
                {\z@}%
                {-3.25ex\@plus -1ex \@minus -.2ex}%
                {1sp}
                {\normalfont\normalsize}
}
\makeatother

\marginparwidth 0pt \oddsidemargin 0pt \evensidemargin 0pt
\topmargin 30pt \textheight 21.0 truecm \textwidth 16.5 truecm

\title{{\Large\bf  Approximate Core Allocations for\\Multiple Partners Matching Games}
\footnote{Supported in part by the National Natural Science Foundation of China (Nos.\,12001507, 11971447 and 11871442) and Natural Science Foundation of Shandong (No.\,ZR2020QA024).}}

\author{Han Xiao\thanks{Corresponding author. Email: hxiao@ouc.edu.cn.} ,
Tianhang Lu, Qizhi Fang}

\affil{School of Mathematical Sciences\\Ocean University of China\\Qingdao, China}

\date{}
\begin{document}


\maketitle

\openup 1.2\jot


\begin{abstract}
The matching game is a cooperative game where the value of every coalition is the maximum revenue of players in the coalition can make by forming pairwise disjoint partners.
The multiple partners matching game generalizes the matching game by allowing each player to have more than one possibly repeated partner.
In this paper, we study profit-sharing in multiple partners matching games.
A central concept for profit-sharing is the core which consists of all possible ways of distributing the profit among individual players such that the grand coalition remains intact.
The core of multiple partners matching games may be empty \cite{DIN99};
even when the core is non-empty, the core membership problem is intractable in general \cite{BKPW18}.
Thus we study approximate core allocations upon which a coalition may be paid less than the profit it makes by seceding from the grand coalition.
We provide an LP-based mechanism guaranteeing that no coalition is paid less than $\frac{2}{3}$ times the profit it makes on its own.
We also show that $\frac{2}{3}$ is the best possible factor relative to the underlying LP-relaxation.
Our result generalizes the work of Vazirani \cite{Vazi21} from matching games to multiple partners matching games.

\hfill

\noindent\textbf{Keywords:} multiple partners matching game, approximate core, $b$-matching.

\noindent\textbf{Mathematics Subject Classification:}  05C57, 91A12, 91A43, 91A46.

\noindent\textbf{JEL Classifcation: } C71, C78.
\hfill

\hfill
\end{abstract}


\section{Introduction}

\subsection{Background and Motivation}

Consider a network of companies such that any connected pair of companies can enter deals with revenue and that every company can only fulfill a limited number of deals.
The scenario above, taken from \cite{KTZ21}, can be modeled as the multiple partners matching game \cite{Soto92, BKPW18} which generalizes the classic matching game \cite{DIN99, BKP12, KP03, KPT20}.
The matching game is a cooperative game where the value of every coalition is the maximum revenue of players in the coalition can make by forming pairwise disjoint partners.
The multiple partners matching game generalizes the matching game by allowing each player to have more than one possibly repeated partner.
The multiple partners setting makes the game model more applicable to the real world and more attractive to researchers \cite{SS71, Gran84, TM98, BKP12, BKPW18}.

The multiple partners matching game, also known as the $b$-matching game \cite{DIN99, KM20-IJCAI, KTZ21}, is defined as follows.
Throughout the paper, $G=(V,E)$ denotes an undirected graph with an \emph{integral vertex capacity function} $b$ and an \emph{edge weight function} $w$.
For any vertex $v\in V$, $\delta (v)$ dentoes the set of edges incident to $v$.
An integral vector $\boldsymbol{x}\in \mathbb{Z}^{E}$ is called a \emph{$b$-matching} of $G$ if it satisfies constraints \eqref{eq:capacity} and \eqref{eq:nonnegativity}
\begin{alignat}{3}
&&\quad \sum_{e\in \delta(v)} x_e&\leq b_v &\qquad \forall ~v &\in V,\label{eq:capacity}\\
&& \quad x_e &\geq 0 &\qquad \forall ~e &\in E,\label{eq:nonnegativity}
\end{alignat}
and a \emph{maximum $b$-matching} of $G$ if it further maximizes $\sum_{e\in E}w_e x_e$.
The \emph{multiple partners matching game} or \emph{$b$-matching game} on $G$, denoted by $\Gamma_G=(N,\gamma)$, is a cooperative game with player set $N=V$ and characteristic function $\gamma: 2^N\rightarrow \mathbb{R}$.
For any \emph{coalition} $S\subseteq N$, $\gamma(S)$ is defined to be the weight of maximum $b$-matchings in the induced subgraph $G[S]$.
Among all possible coalitions, the most important one is the \emph{grand coalition} $N$.
The worth of the game $\Gamma_G$ is determined by the grand coalition, which equals the weight of maximum $b$-matchings in $G$.
When the underlying graph is bipartite, the multiple partners matching game is called the \emph{multiple partners assignment game}.
When every vertex has capacity one, the multiple partners matching game becomes the classic \emph{matching game}.

A central question in cooperative game theory is to allocate the total profit generated through cooperation among individual players.
However, not all allocations are equally desirable. 
Emphases on different criteria lead to different allocation concepts.
The core, which addresses the issue of stability, i.e., any coalition will not be able to obtain more profits on its own by seceding from the grand coalition, is one of the most attractive allocation concepts.

An \emph{allocation} of the multiple partners matching game $\Gamma_G=(N,\gamma)$ is a vector $\boldsymbol{a}\in \mathbb{R}^N_{\geq 0}$ specifying how to allocate the worth of the game among individual players.
The \emph{core} of the game $\Gamma_G=(N,\gamma)$ is the set of allocations $\boldsymbol{a}\in \mathbb{R}^N_{\geq 0}$ satisfying
\begin{enumerate}
  \item[\textendash]  $\sum_{i\in N}a_i =\gamma (N)$,
  \item[\textendash]  $\sum_{i\in S}a_i \geq \gamma (S)$ for any $S\subseteq N$.
\end{enumerate}
Thus the core captures all possible ways of distributing the worth of the game among individual players such that the grand coalition remains intact.
Deng et al. \cite{DIN99} showed that the core of multiple partners assignment games is non-empty.
However, the core of multiple partners matching games is not guaranteed to be non-empty in general.
Bir\'{o} et al. \cite{BKPP19,BKPW18} showed that the core non-emptiness problem and the core membership problem for multiple partners matching games are solvable in polynomial time if $b\leq \textbf{2}$;
however, they are co-NP-hard even for $b = \boldsymbol{3}$.

\subsection{Our contributions}
In this paper, we study approximate core allocations for multiple partners matching games.
An allocation $\boldsymbol{a}\in \mathbb{R}^N_{\geq 0}$ lies in the \emph{$\alpha$-approximate core} ($0<\alpha\leq 1$) of the game $\Gamma_G=(N,\gamma)$ if
\begin{enumerate}
  \item[\textendash]  $\sum_{i\in N} a_i \leq \gamma (N)$
  \item[\textendash]  $\sum_{i\in S} a_i \geq \alpha\cdot\gamma (S)$ for any $S\subseteq N$.
\end{enumerate}
Any allocation in the $\alpha$-approximate core implies that,
under the given allocation, no coalition will be paid less than $\alpha$ times the profit it makes on its own, or equivalently, will gain more than a $\frac{1}{\alpha}$ factor by seceding from the grand coalition.
Thus, it is interesting to find the largest $\alpha$ such that the $\alpha$-approximate core is non-empty, especially when the core is empty.
A mechanism computing allocations in the $\alpha$-approximate core with the largest $\alpha$ is even more desirable.

We propose an LP-based mechanism that always computes an allocation in the $(1 - \frac{1}{l_{\min} \cdot b_{\min}})$-approximate core of the game $\Gamma_G$,
where $l_{\min}$ is the minimum length of odd cycles and $b_{\min}$ is the minimum capacity of vertices.
Consequently, $\frac{2}{3}$-approximate core of multiple partners matching games is always non-empty.
Moreover, $\frac{2}{3}$ is the best possible factor for multiple partners matching games as it coincides with the integrality gap of the underlying LP-relaxation.
Our result generalizes the mechanism of Vazirani \cite{Vazi21} from the matching setting to the $b$-matching setting.
Both mechanisms are LP-based and rely heavily on matching theory, LP duality theory, and their highly non-trivial interplay.
Compared to the mechanism of Vazirani, our mechanism is more efficient and achieves a better approximation ratio for $b$-matching instances.
Besides, we provide a higher point of view for interpreting the duality principle and the rounding technique behind the LP-based mechanism.

\subsection{Related work}

The matching game, introduced by Shapley and Shubik \cite{SS71}, is a special case of the multiple partners matching game.
It has attracted many research efforts and various allocation concepts have been studied within different settings \cite{SR94, FKFH98, DIN99, KP03, KPT20, XF21}.
Recently, Vazirani \cite{Vazi21} studied the approximate core for matching games and achieved the best possible approximation ratio.

The multiple partners matching game,
on one hand generalizes the matching game by allowing each player to have more than one possibly repeated partner,
on the other hand serves as a special case of the linear production game \cite{Owen75,DIN99, KM20-IJCAI}.
Sotomayor \cite{Soto92} surveyed multiple partners assignment games.
Deng et al. \cite{DIN99} showed that the core of multiple partners assignment games is always non-empty.
Bir\'{o} et al. \cite{BKPP19,BKPW18} showed that the core non-emptiness problem and the core membership problem for multiple partners matching games are solvable in polynomial time if $b\leq \textbf{2}$;
however, they are co-NP-hard even for $b = \boldsymbol{3}$.
In addition to the core, other properties and solution concepts have also been studied for multiple partners matching games.
Kumabe and Maehara \cite{KM20-IJCAI} provided a necessary and sufficient characterization for the convexity of multiple partners matching games.
K\"{o}nemann, Toth and Zhou \cite{KTZ21} discussed the complexity of nucleolus computation in multiple partners assignment games.

\section{Preliminaries}
\subsection{LP-relaxation for the maximum $b$-matching problem}
To study approximate core allocations for the multiple partners matching game $\Gamma_G=(N,\gamma)$,
we resort to the natural LP-relaxation for the maximum $b$-matching problem on $G=(V,E)$,
which can be formulated as LP \eqref{eq:core-primal}.
\begin{maxi!}
	{}{\sum_{ij\in E} w_{ij} x_{ij} \label{eq:core-primal.objective}}
	{\label{eq:core-primal}}{}
	\addConstraint { \sum_{ij\in \delta (i)} x_{ij} }{\leq b_{i} ,\quad }{\forall i\in V \label{eq:core-primal.c1}}
	\addConstraint { x_{ij} }{\geq 0 ,\quad }{\forall ij\in E . \label{eq:core-primal.c2}}
\end{maxi!}

Any feasible solution to LP \eqref{eq:core-primal} is called a \emph{fractional $b$-matching} of $G$.
Any optimal solution to LP \eqref{eq:core-primal} is called a \emph{maximum fractional $b$-matching} of $G$.
For any feasible solution $\boldsymbol{x}$ to LP \eqref{eq:core-primal} and any subgraph $H$ of $G$, we use $w(H,\boldsymbol{x})$ to denote $\sum_{e\in H} w_e x_e$.

Taking $y_i$ to be the dual variable for the constraint \eqref{eq:core-primal.c1} gives the dual LP of LP \eqref{eq:core-primal}.
\begin{mini!}
	{}{\sum_{i\in V} b_i y_i \label{eq:core-dual.objective}}
	{\label{eq:core-dual}}{}
	\addConstraint { y_i + y_j}{\geq w_{ij} ,\quad }{\forall ij\in E \label{eq:core-dual.c1}}
	\addConstraint { y_i }{\geq 0 ,\quad }{\forall i\in V . \label{eq:core-dual.c2}}
\end{mini!}

Any feasible solution to LP \eqref{eq:core-dual} is called a \emph{fractional $w$-cover} of $G$.
Any optimal solution to LP \eqref{eq:core-dual} is called a \emph{minimum fractional $w$-cover} of $G$.
For any feasible solution $\boldsymbol{y}$ to LP \eqref{eq:core-dual} and any subgraph $H$ of $G$, we use $b(H,\boldsymbol{y})$ to denote $\sum_{v\in H} b_v y_v$.

Notice that if LP \eqref{eq:core-primal} has an integral optimal solution $\boldsymbol{x}$, then the worth of the game $\Gamma_G$ equals the optimum of LP \eqref{eq:core-primal}.
Consequently, every dual optimal solution $\boldsymbol{y}$ defines a core allocation $\boldsymbol{a}\in \mathbb{R}^{N}$ for the game $\Gamma_G$ with $a_i = b_i y_i$ for $i\in N$.
This observation can be verified via the LP duality theory.
For the grand coalition $N$, we have
\begin{equation*}
\gamma (N) = \sum_{ij\in E} w_{ij} x_{ij} = \sum_{i\in V} b_i y_i=\sum_{i\in V} a_i.
\end{equation*}
For any coalition $S\subseteq N$ and any maximum $b$-matching $\boldsymbol{z}$ in the induced subgraph $G[S]$,
we have
\begin{equation*}
\gamma (S) = \sum_{ij\in G[S]} w_{ij} z_{ij} \leq \sum_{{ij} \in G[S]} (y_i + y_j) z_{ij} \leq \sum_{i\in S} b_i y_i =\sum_{i\in S} a_i.
\end{equation*}
Thus the vector $\boldsymbol{a}$ defined from dual optimal solution $\boldsymbol{y}$ is indeed a core allocation of the game $\Gamma_G$.

When $G$ is bipartite, it is well known that LP \eqref{eq:core-primal} always admits an integral optimal solution \cite{Schr86}.
Thus the core of $\Gamma_G$ is non-empty when $G$ is bipartite \cite{DIN99}.
However, the core of $\Gamma_G$ is not guaranteed to be non-empty in general.
Hence a natural question is to determine the largest $\alpha$ for the $\alpha$-approximate core of $\Gamma_G$ to be non-empty.
The work of Vazirani \cite{Vazi21} for matching games is built on the half-integrality of fractional matching polytopes.
It turns out that the fractional $b$-matching polytope is also half-integral,
which enables us to study approximate core allocations for multiple partners matching games.

\subsection{Half-integrality of fractional $b$-matching polytopes}

The polytope defined by the constraints \eqref{eq:core-primal.c1}-\eqref{eq:core-primal.c2} is called the \emph{fractional $b$-matching polytope} of $G$.
A vector $\boldsymbol{x}$ is called \emph{half-integral} if $2 \boldsymbol{x}$ is integral.
Balinski \cite{Bali65} showed that every vertex of the fractional $b$-matching polytope of $G$ is half-integral when $b \equiv \boldsymbol{1}$.
Based on the observation of Balinski \cite{Bali65}, it is easy to see that every vertex of the fractional $b$-matching polytope of $G$ is also half-integral when $b\geq \boldsymbol{1}$.

\begin{lemma}[Balinski \cite{Bali65}]
\label{thm:Balinski}
Any square non-decomposable $(0,1)$-matrix with at most two $1$'s in any column has determinant $0$, $\pm 1$ or $\pm 2$.
\end{lemma}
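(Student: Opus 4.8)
The plan is to prove the determinant bound by first extracting the combinatorial structure forced on a non-decomposable $(0,1)$-matrix with at most two $1$'s per column, and then reading the determinant off that structure. First I would dispose of degenerate columns: an all-zero column gives determinant $0$, and in fact for $n\geq 2$ an all-zero column already exhibits a zero block and so contradicts non-decomposability. Thus I may assume every one of the $n$ columns carries either one or two $1$'s.

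The central idea is to read the matrix as an incidence structure. I regard the $n$ rows as vertices and the $n$ columns as edges: a column with two $1$'s in rows $i,j$ becomes the edge $ij$, while a column with a single $1$ in row $i$ becomes an edge joining $i$ to one common extra \emph{ground} vertex $g$. This produces a graph $H$ on $n$ or $n+1$ vertices with exactly $n$ edges, and the original matrix is the unsigned vertex--edge incidence matrix of $H$ with the ground row deleted (when $g$ is present). The crucial translation is that non-decomposability of the matrix corresponds to connectedness of $H$, since a row/column permutation displaying a zero block is precisely a set of columns (edges) avoiding a complementary set of rows (vertices), i.e.\ a separation of $H$. I expect this dictionary between the matrix notion of non-decomposability and graph connectivity to be the main obstacle: it is the only place the hypothesis is genuinely used, and it is what rules out the forbidden configurations (several vertex-disjoint odd cycles) whose incidence matrices would multiply to a determinant of absolute value $4$ or larger.

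Granting connectedness, a counting argument pins down the shape of $H$ and hence the determinant. If a ground vertex is present, then $H$ is connected with $n+1$ vertices and $n$ edges, hence a tree, and the matrix is the reduced incidence matrix of a tree; repeatedly expanding along a leaf (a degree-one real vertex, whose row has a single $1$) strips one edge at a time, each contributing a factor $\pm1$, so the determinant is $\pm1$. If instead no column has a single $1$, then $H$ is connected with $n$ vertices and $n$ edges, hence unicyclic; peeling off the pendant trees by the same leaf-expansions (again only $\pm1$ factors) reduces the determinant, up to sign, to that of the incidence matrix of the unique cycle.

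It then remains to evaluate the incidence matrix of a single cycle of length $\ell$, which after reindexing has the form $I+S$ with $S$ the cyclic shift; its eigenvalues are $1+\omega$ as $\omega$ ranges over the $\ell$-th roots of unity, so
\[
\prod_{\omega^{\ell}=1}(1+\omega)=
\begin{cases}
0 & \ell \text{ even},\\
2 & \ell \text{ odd}.
\end{cases}
\]
Hence an even cycle contributes $0$ and an odd cycle contributes $\pm2$, and combining with the $\pm1$ factors from the peeled tree part yields determinant $0$, $\pm1$, or $\pm2$ in every case. Assembling the tree case and the unicyclic case completes the argument, the point being that connectedness permits at most one cycle and therefore at most a single factor of $2$.
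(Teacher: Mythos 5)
The paper never proves this lemma at all: it is quoted verbatim from Balinski's 1965 paper, so there is no internal proof to compare against, and your attempt has to be judged on its own. Its overall route is the classical one, and most of it is sound: the incidence-matrix dictionary (rows as vertices, columns as edges, single-$1$ columns as edges to a ground vertex $g$), Laplace expansion along degree-one rows to strip pendant trees at a cost of $\pm 1$ per step, reduction of the ground case to a tree ($\det = \pm 1$) and the groundless case to a single cycle, and the circulant computation $\prod_{\omega^{\ell}=1}(1+\omega) = 1-(-1)^{\ell}$ giving $0$ for even cycles and $\pm 2$ for odd ones. Repeated columns are also covered correctly, since parallel edges form a $2$-cycle and contribute $0$.

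The gap is exactly the step you flag as the crux and then ``grant'': the claimed dictionary ``zero block $\Leftrightarrow$ separation of $H$'' is false in both directions, so ``non-decomposable $\Rightarrow H$ connected'' does not follow as you state it. One direction (which you do not need) fails because of the ground vertex: for $n=2$ with columns $\{1g,\,12\}$ the matrix is $\left(\begin{smallmatrix}1&1\\0&1\end{smallmatrix}\right)$, whose row $\{2\}$ and column $\{1g\}$ form a zero block of complementary dimensions, yet $H$ is the connected path $g$--$1$--$2$. The direction you do need also has a hole: if $H$ is disconnected with components of vertex/edge counts $(v_i,e_i)$, the zero blocks produced have dimensions $v_i \times (n-e_i)$, and this witnesses decomposability (i.e., has complementary dimensions $k\times(n-k)$) only when $v_i = e_i$. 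When some component has $v_i \neq e_i$, disconnectedness does not contradict non-decomposability, and your case analysis is silent. The repair is short and must be said: if every component is ``balanced'' ($e_i = v_i$), the matrix permutes to a direct sum of square blocks, contradicting non-decomposability under any of the standard definitions; otherwise some component has $e_i > v_i$, so its $e_i$ columns are supported on only $v_i$ rows and are linearly dependent, forcing $\det = 0$, which lies in the allowed set anyway. The same definitional care applies to your side remark that an all-zero column contradicts non-decomposability: that holds for the ``zero block'' notion but not for the ``direct sum'' notion (the paper never defines the term); the safe and sufficient statement is simply that a zero column forces $\det = 0$. With that case analysis inserted, your proof is complete.
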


According to Lemma \ref{thm:Balinski}, the half-integrality of fractional $b$-matching polytopes follows directly.

\begin{lemma}\label{thm:half.integrality}
Every vertex of the fractional $b$-matching polytope is half-integral.
\end{lemma}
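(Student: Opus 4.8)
The plan is to realise any vertex of the polytope as the unique solution of a square linear system built from the vertex--edge incidence matrix of $G$, and then to read off half-integrality from Lemma~\ref{thm:Balinski} via Cramer's rule. First I would fix a vertex $\boldsymbol{x}$ and record that, being a vertex, it is the unique point at which some $|E|$ of the constraints \eqref{eq:core-primal.c1}--\eqref{eq:core-primal.c2} hold with equality and are linearly independent. The tight inequalities \eqref{eq:core-primal.c2} merely pin the off-support coordinates to $0$, so after discarding them it suffices to analyse the restriction $\boldsymbol{x}^{+}$ of $\boldsymbol{x}$ to its support $E^{+}=\{e:x_e>0\}$. On the support only capacity constraints \eqref{eq:core-primal.c1} can be tight, and choosing a basis $T$ among them yields a square nonsingular system $A\boldsymbol{x}^{+}=\boldsymbol{b}|_{T}$, where $A$ is the submatrix of the vertex--edge incidence matrix with rows indexed by $T$ and columns by $E^{+}$, and $\boldsymbol{b}|_{T}$ is integral. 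Since every edge has exactly two endpoints, every column of $A$ carries at most two $1$'s, so $A$ is exactly the kind of matrix Lemma~\ref{thm:Balinski} addresses.

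Cramer's rule gives $x^{+}_{e}=\det(A_{e})/\det(A)$ with $\det(A_{e})\in\mathbb{Z}$, so it remains to prove $|\det A|\le 2$; this is the crux. To secure it I would not treat $A$ as a whole but decompose the support into its connected components. Edges in distinct components share no vertex, so $A$ is block-diagonal along these components and it suffices to bound the determinant of each diagonal block $A_c$. Here I would invoke the vertex property once more in combinatorial form: in any component carrying an even cycle, or carrying two cycles, there is a nonzero $\pm\varepsilon$ perturbation of $\boldsymbol{x}^{+}$ supported on that component that preserves every tight capacity constraint (alternate signs around an even cycle; more generally route a circulation through the cycles), contradicting extremality of $\boldsymbol{x}$. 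Hence each component is either a tree or contains a single, necessarily odd, cycle, which in particular forces $|E^{+}_{c}|\le|U_c|$ on its vertex set $U_c$.

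Finally I would evaluate each block determinant. For a tree component the square system drops one vertex and $A_c$ is a nonsingular submatrix of a (totally unimodular) tree incidence matrix, so $\det A_c=\pm1$ and $\boldsymbol{x}^{+}$ is integral there. For a component carrying an odd cycle, stripping pendant vertices reduces $A_c$ to the incidence matrix of that odd cycle, a non-decomposable $(0,1)$-matrix with two $1$'s per column whose determinant is $\pm2$ by Lemma~\ref{thm:Balinski}, while the pendant reductions contribute only factors $\pm1$; hence $|\det A_c|=2$. In either case $|\det A_c|\le2$, whence $x^{+}_{e}\in\tfrac12\mathbb{Z}$ for every $e$ and $\boldsymbol{x}$ is half-integral. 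The main obstacle is precisely this determinant bound: Lemma~\ref{thm:Balinski} controls a single non-decomposable block, but a priori the determinants of several blocks could multiply into a high power of $2$; the structural description of the support---one odd cycle per component and everything else a tree---is what confines the factor $2$ to appear at most once per block and makes the conclusion follow from Lemma~\ref{thm:Balinski}.
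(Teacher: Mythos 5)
Your proof is correct and rests on the same key ingredient as the paper, whose entire argument is that the lemma ``follows directly'' from Balinski's determinant bound (Lemma~\ref{thm:Balinski}): you simply make that one-line claim rigorous via basic feasible solutions, the vertex--edge incidence matrix, and Cramer's rule. The detail you add --- decomposing the support into connected components (each a tree or a unicyclic graph with an odd cycle) so that the determinant bound is applied per block rather than to the possibly decomposable full basis matrix --- is exactly the bookkeeping the paper leaves implicit.
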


Lemma \ref{thm:half.integrality} implies that LP \eqref{eq:core-primal} always admits a half-integral optimal solution,
which is the starting point of our profit-sharing mechanism for multiple partners matching games.

\section{A mechanism for approximate core allocations}

In this section, we introduce an LP-based mechanism for multiple partners matching games which guarantees that no coalition is paid less than $\frac{2}{3}$ times the profit it makes on its own.
Our mechanism is built on the natural LP-relaxation for the maximum $b$-matching problem.
The general idea of our mechanism is to round a half-integral optimal solution to a feasible integral solution which is used to guide the construction of approximate core allocations from dual optimal solutions.
Our mechanism can be sketched as follows.
First, compute a half-integral optimal solution.
Then, modify the optimal solution without violating feasibility and optimality to make all non-integral edges form vertex-disjoint odd cycles.
Next, round the new optimal solution into an integral feasible solution such that the rounding loss is minimized in every odd cycle.
Finally, define an approximate core allocation from any dual optimal solution by referring to the rounding loss in every odd cycle.
In the following, we illustrate our mechanism with the multiple partners matching game $\Gamma_G=(N,\gamma)$.

\subsection{Computing a half-integral optimal $b$-matching}
\label{sec:half.integral.bmatching}

To compute an approximate core allocation for $\Gamma_G$, we have to obtain a half-integral optimal solution to LP \eqref{eq:core-primal}, i.e., a half-integral optimal $b$-matching in $G$.
To this end, we use the doubling idea of Nemhauser and Trotter \cite{NT75} to construct an auxiliary bipartite graph $\hat{G}=(\hat{V},\hat{E})$ from $G=(V,E)$.
Recall that $G$ has integral vertex capacities $b$ and edge capacities $w$.
We construct $\hat{G}$ with integral vertex capacities $\hat{b}$ and edge capacities $\hat{w}$ as follows.
For each vertex $i\in V$ with capacity $b_i$, $\hat{V}$ has two vertices $i'$ and $i''$ with capacities $\hat{b}_{i'}=\hat{b}_{i''}=b_i$.
For each edge ${ij} \in E$ with weight $w_{ij}$, $\hat{E}$ has two edges $i' j''$ and $i'' j'$ with weights $\hat{w}_{i' j''}=\hat{w}_{i'' j'}= w_{ij}$.
Since every cycle of length $k$ in $G$ is transformed into a cycle of length $2k$ in $\hat{G}$, $\hat{G}$ is bipartite.
Thus for bipartite graph $\hat{G}$, a maximum $b$-matching and a minimum $w$-cover can be computed in polynomial time \cite{Schr86}, say $\hat{\boldsymbol{x}}$ and $\hat{\boldsymbol{y}}$, respectively.
Next, we define $\boldsymbol{x}$ and $\boldsymbol{y}$ in \eqref{eq:half.integral.bmatching}.
\begin{equation}\label{eq:half.integral.bmatching}
x_{ij}=\frac{1}{2} (\hat{x}_{i' j''} + \hat{x}_{i'' j'}) \text{~~and~~} y_{i}= \frac{1}{2}(\hat{y}_{i'}+ \hat{y}_{i''}).
\end{equation}
It is easy to see that $\boldsymbol{x}$ and $\boldsymbol{y}$ obtained above are feasible to LPs \eqref{eq:core-primal} and \eqref{eq:core-dual}, respectively.
Moreover, $w(G,\boldsymbol{x})= b(G,\boldsymbol{y})$.
Thus $\boldsymbol{x}$ is a half-integral optimal $b$-matching and $\boldsymbol{y}$ is an optimal fractional $w$-cover.

\begin{lemma}\label{thm:half.integral.polytope}
$\boldsymbol{x}$ is a half-integral optimal $b$-matching and $\boldsymbol{y}$ is an optimal fractional $w$-cover.
\end{lemma}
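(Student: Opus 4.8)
The plan is to transfer the integral primal–dual optimality that is available on the bipartite graph $\hat G$ back to $G$ through the averaging in \eqref{eq:half.integral.bmatching}, and then to certify optimality on $G$ purely by weak LP duality. Concretely, I would (i) check that $\boldsymbol x$ is feasible for LP \eqref{eq:core-primal} and that $\boldsymbol y$ is feasible for LP \eqref{eq:core-dual}, (ii) show that the objective values satisfy $w(G,\boldsymbol x)=\tfrac12 w(\hat G,\hat{\boldsymbol x})$ and $b(G,\boldsymbol y)=\tfrac12 b(\hat G,\hat{\boldsymbol y})$, and (iii) use the exact primal–dual equality on $\hat G$ to conclude $w(G,\boldsymbol x)=b(G,\boldsymbol y)$, whence weak duality forces both $\boldsymbol x$ and $\boldsymbol y$ to be optimal. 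Half-integrality of $\boldsymbol x$ is then immediate from \eqref{eq:half.integral.bmatching}, since $\hat{\boldsymbol x}$, being a maximum $b$-matching of the bipartite graph $\hat G$, is integral, so $2x_{ij}=\hat x_{i'j''}+\hat x_{i''j'}\in\mathbb Z$.

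For the feasibility of $\boldsymbol x$, the key bookkeeping is that the doubling sends the edges incident to $i$ in $G$ bijectively onto the edges incident to $i'$ in $\hat G$ (via $ij\mapsto i'j''$) and, separately, onto the edges incident to $i''$ (via $ij\mapsto i''j'$). Hence for every $i\in V$ I can split $\sum_{ij\in\delta(i)} x_{ij}=\tfrac12\big(\sum_{e\in\delta(i')}\hat x_e+\sum_{e\in\delta(i'')}\hat x_e\big)$, and each inner sum is bounded by $\hat b_{i'}=\hat b_{i''}=b_i$ because $\hat{\boldsymbol x}$ satisfies the capacity constraint of $\hat G$; averaging gives $\sum_{ij\in\delta(i)} x_{ij}\le b_i$, which is \eqref{eq:core-primal.c1}, while nonnegativity \eqref{eq:core-primal.c2} is inherited termwise from $\hat{\boldsymbol x}\ge\boldsymbol 0$.

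For the feasibility of $\boldsymbol y$, the one step needing care is that a single covering constraint of $G$ is recovered by averaging the two covering constraints of the corresponding doubled edges. For $ij\in E$, dual feasibility of $\hat{\boldsymbol y}$ on $i'j''$ and $i''j'$ gives $\hat y_{i'}+\hat y_{j''}\ge w_{ij}$ and $\hat y_{i''}+\hat y_{j'}\ge w_{ij}$; adding and halving yields $y_i+y_j=\tfrac12(\hat y_{i'}+\hat y_{i''}+\hat y_{j'}+\hat y_{j''})\ge w_{ij}$, which is exactly \eqref{eq:core-dual.c1}, and \eqref{eq:core-dual.c2} again follows termwise.

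Finally I would establish the objective identities by regrouping sums: since $\hat w_{i'j''}=\hat w_{i''j'}=w_{ij}$ we get $\sum_{ij\in E} w_{ij}x_{ij}=\tfrac12\sum_{\hat e\in\hat E}\hat w_{\hat e}\hat x_{\hat e}$, and since $\hat b_{i'}=\hat b_{i''}=b_i$ we get $\sum_{i\in V} b_i y_i=\tfrac12\sum_{\hat v\in\hat V}\hat b_{\hat v}\hat y_{\hat v}$. Because $\hat G$ is bipartite, its primal and dual optima coincide, $w(\hat G,\hat{\boldsymbol x})=b(\hat G,\hat{\boldsymbol y})$ (strong duality on a bipartite, totally unimodular program), so the two halves are equal and $w(G,\boldsymbol x)=b(G,\boldsymbol y)$. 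Weak duality between LP \eqref{eq:core-primal} and LP \eqref{eq:core-dual} then pins down both as optimal. I do not expect a genuine obstacle: the argument is entirely a verification, and the only place demanding real attention is the incidence/averaging bookkeeping between $G$ and $\hat G$ — in particular the observation that summing the two doubled covering constraints exactly reproduces the single covering constraint on $G$.
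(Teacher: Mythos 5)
Your proposal is correct and follows essentially the same route as the paper: the same averaging bookkeeping for primal and dual feasibility, the same regrouping identities $w(G,\boldsymbol{x})=\tfrac12 w(\hat G,\hat{\boldsymbol{x}})$ and $b(G,\boldsymbol{y})=\tfrac12 b(\hat G,\hat{\boldsymbol{y}})$, and the same appeal to duality (equality of the optima on the bipartite $\hat G$, then weak duality on $G$) to certify optimality, with half-integrality inherited from the integrality of $\hat{\boldsymbol{x}}$. Your write-up is merely slightly more explicit than the paper about which duality statement is invoked at which step.
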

\begin{proof}
We first show that $\boldsymbol{x}$ and $\boldsymbol{y}$ defined in \eqref{eq:half.integral.bmatching} are feasible to LPs \eqref{eq:core-primal} and \eqref{eq:core-dual}, respectively.
For any vertex $i\in V$, we have
\begin{equation*}
\sum_{ij\in \delta (i)} x_{ij}=\frac{1}{2}\sum_{i' j''\in \delta (i')} \hat{x}_{i' j''} + \frac{1}{2} \sum_{i'' j'\in \delta (i'')} \hat{x}_{i'' j'}\leq \frac{1}{2}(\hat{b}_{i'} + \hat{b}_{i''}) = b_{i},
\end{equation*}
which implies the feasibility of $\boldsymbol{x}$.
For any edge $ij\in E$, we have
\begin{equation*}
y_{i}+y_{j} = \frac{1}{2} (\hat{y}_{i'}+ \hat{y}_{j''}) + \frac{1}{2} (\hat{y}_{i''}+ \hat{y}_{j'}) \geq \frac{1}{2} (\hat{w}_{i' j''} + \hat{w}_{i'' j'}) = w_{ij},
\end{equation*}
which implies the feasibility of $\boldsymbol{y}$.

It remains to show that $w(G,\boldsymbol{x})= b(G,\boldsymbol{y})$.
On one hand,
\begin{equation}\label{eq:misc.1}
\sum_{ij\in E} w_{ij} x_{ij}=\frac{1}{2}\sum_{i' j'' \in \hat{E}} \hat{w}_{i' j''}\hat{x}_{i' j''} + \frac{1}{2}\sum_{i'' j' \in \hat{E}} \hat{w}_{i'' j'} \hat{x}_{i'' j'} = \frac{1}{2} \sum_{e\in \hat{E}} \hat{w}_{e} \hat{x}_e.
\end{equation}
On the other hand,
\begin{equation}\label{eq:misc.2}
\sum_{i \in V} b_{i} y_{i}= \frac{1}{2} \sum_{i'\in \hat{V}} \hat{b}_{i'} \hat{y}_{i'} + \frac{1}{2} \sum_{i''\in \hat{V}} \hat{b}_{i''} \hat{y}_{i''} = \frac{1}{2} \sum_{v\in \hat{V}} \hat{b}_{v} \hat{y}_{v}.
\end{equation}
By the LP duality theory, \eqref{eq:misc.1} and \eqref{eq:misc.2} imply $w(G,\boldsymbol{x})= b(G,\boldsymbol{y})$.
And the half-integrality of $\boldsymbol{x}$ follows from the integrality of $\hat{\boldsymbol{x}}$.
\end{proof}

Alternatively, a half-integral optimal $b$-matching can also be obtained by using the idea of Anstee \cite{Anst87},
where every fractional $b$-matching $\boldsymbol{x}$ is viewed as a symmetric matrix $(a_{i, j})_{\lvert V\rvert \times \lvert V\rvert}$ with $a_{i,j}=a_{j,i}=x_{ij}$.
To obtain the symmetric matrix corresponding to a half-integral optimal $b$-matching,
first solve a minimum cost flow problem defined from the maximum $b$-matching problem, which returns an asymmetric integral matrix,
then the matrix is manipulated to obtain a symmetric half-integral matrix corresponding to a half-integral optimal $b$-matching.
We refer to the work of Anstee \cite{Anst87} for a detailed explanation.

\subsection{Modifying the half-integral optimal $b$-matching}
Let $\boldsymbol{x}$ and $\boldsymbol{y}$ denote optimal solutions to LPs \eqref{eq:core-primal} and \eqref{eq:core-dual} obtained in Subsection \ref{sec:half.integral.bmatching}, respectively.
Clearly, $w(G,\boldsymbol{x})=b(G,\boldsymbol{y})$.
By the principle of complementary slackness, for any edge $ij\in E$,
\begin{equation}\label{eq:complementary-lackness.edge}
x_{ij} \cdot (y_i + y_j - w_{ij})=0,
\end{equation}
and for any vertex $i\in V$,
\begin{equation}\label{eq:complementary-lackness.vertex}
y_i \cdot (\sum_{ij\in \delta (i)} x_{ij} - b_i ) =0.
\end{equation}
Lemma \ref{thm:half.integral.polytope} guarantees that $\boldsymbol{x}$ is half-integral.

Now we focus on the subgraph of $G$ induced by \emph{non-integral} edges of $\boldsymbol{x}$,
which is crucial for our mechanism design.
Let $G'=(V',E')$ denote the subgraph of $G$ induced by non-integral edges of $\boldsymbol{x}$, i.e., $e\in E'$ if and only if $x_{e}$ is not integral.
Let $T=e_1 e_2 \ldots e_l$ with $e_i=v_i v_{i+1}$ be a trail (walk with repeated vertices allowed) in $G'$.
Perform the following operations on $\boldsymbol{x}$ along $T$ for $i=1,2,\ldots, l$,
\begin{equation}\label{eq:operation-integrality}
x_{e_i}=\begin{cases}
		  x_{e_i}-\frac{1}{2} \quad &i \text{ is odd}, \\
          x_{e_i}+\frac{1}{2} \quad &i \text{ is even}. \\  
     \end{cases}
\end{equation}
As we shall see, for certain trail $T$ in $G'$, the operations \eqref{eq:operation-integrality} change neither the feasibility nor the optimality of $\boldsymbol{x}$ to LP \eqref{eq:core-primal}.

\begin{lemma}\label{thm:odd.vertex.elimination}
Let $T$ be a trail joining two distinct odd degree vertices in $G'$.
Let $\boldsymbol{x}'$ be obtained from $\boldsymbol{x}$ by performing the operations \eqref{eq:operation-integrality} along $T$.
Then $\boldsymbol{x}'$ is feasible to LP \eqref{eq:core-primal}.
Moreover,
\begin{equation}
w(T,\boldsymbol{x}')=w(T,\boldsymbol{x}).
\end{equation}
\end{lemma}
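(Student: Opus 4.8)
The plan is to verify the two assertions separately, exploiting the half-integrality of $\boldsymbol{x}$ together with the complementary slackness conditions \eqref{eq:complementary-lackness.edge} and \eqref{eq:complementary-lackness.vertex}. Throughout I would write the increment applied to $x_{e_i}$ by \eqref{eq:operation-integrality} as $\sigma_i=-\frac12$ when $i$ is odd and $\sigma_i=+\frac12$ when $i$ is even, so that consecutive increments cancel, $\sigma_{i-1}+\sigma_i=0$. Nonnegativity \eqref{eq:core-primal.c2} is then immediate: the only edges that decrease are the odd-indexed ones, and every trail edge lies in $G'$, hence is non-integral with $x_{e_i}\geq\frac12$, so subtracting $\frac12$ keeps it nonnegative. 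For the capacity constraint \eqref{eq:core-primal.c1} I would compute the net change $\Delta_v$ of $\sum_{e\in\delta(v)}x_e$ at each vertex $v$. Each time the trail passes through $v$ internally it uses a consecutive pair of edges $e_{i-1},e_i$ incident to $v$, whose increments cancel; hence every interior passage contributes $0$, and the only surviving contributions come from the two endpoints. This leaves $\Delta_{v_1}=\sigma_1=-\frac12$ at the start and $\Delta_{v_{l+1}}=\sigma_l$ at the end (using that the two endpoints are distinct, so they do not interfere), every other vertex being unchanged. A decrease never violates \eqref{eq:core-primal.c1}, so the only dangerous case is the endpoint $v_{l+1}$ when $l$ is even, where $\Delta_{v_{l+1}}=+\frac12$.

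This is where the odd-degree hypothesis and half-integrality do the work. Since $\boldsymbol{x}$ is half-integral, $2x_e$ is odd exactly on the non-integral edges, so $2\sum_{e\in\delta(v)}x_e\equiv\deg_{G'}(v)\pmod 2$. For an odd-degree vertex $v$ of $G'$ this sum is therefore a non-integral half-integer bounded by the integer $b_v$, whence $\sum_{e\in\delta(v)}x_e\leq b_v-\frac12$. Applying this to $v_{l+1}$ shows the extra $\frac12$ is absorbed, so \eqref{eq:core-primal.c1} survives at $v_{l+1}$ and $\boldsymbol{x}'$ is feasible.

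For weight preservation I would expand $w(T,\boldsymbol{x}')-w(T,\boldsymbol{x})=\sum_{i=1}^{l}\sigma_i w_{e_i}$. Every trail edge is non-integral, hence carries positive $x$-value, so edge complementary slackness \eqref{eq:complementary-lackness.edge} gives $w_{e_i}=y_{v_i}+y_{v_{i+1}}$. Substituting and telescoping with $\sigma_{i-1}+\sigma_i=0$ collapses all interior terms, since the coefficient of $y_{v_j}$ for $2\leq j\leq l$ is $\sigma_{j-1}+\sigma_j=0$, and leaves exactly $\sigma_1 y_{v_1}+\sigma_l y_{v_{l+1}}$. Finally, the same non-integrality argument shows the capacity constraint is \emph{strict} at the odd-degree endpoints $v_1$ and $v_{l+1}$, so vertex complementary slackness \eqref{eq:complementary-lackness.vertex} forces $y_{v_1}=y_{v_{l+1}}=0$, making the difference vanish and yielding $w(T,\boldsymbol{x}')=w(T,\boldsymbol{x})$.

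The main obstacle, and the conceptual heart of the argument, is the interplay in this last step: it is the parity consequence of half-integrality (odd degree forces a strictly slack capacity constraint) feeding into complementary slackness (strict slack forces the dual variable to vanish) that simultaneously secures feasibility at the increased endpoint and kills the boundary terms in the weight computation. The other place needing care is the bookkeeping of $\Delta_v$ for a trail with repeated vertices, where one must argue that only the two distinct endpoints contribute and that revisiting an endpoint as an interior vertex adds nothing.
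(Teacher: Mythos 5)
Your proposal is correct and follows essentially the same route as the paper's proof: both use the parity consequence of half-integrality at odd-degree vertices of $G'$ to get strict slack $0<\sum_{e\in\delta(v)}x_e<b_v$ at the two endpoints (hence feasibility after the $\pm\frac12$ shifts), then invoke edge complementary slackness to replace $w_{e_i}$ by $y_{v_i}+y_{v_{i+1}}$, telescope the cancelling increments, and kill the boundary terms via vertex complementary slackness forcing $y_{v_1}=y_{v_{l+1}}=0$. If anything, your write-up is slightly more explicit than the paper's, which asserts the strict-slack inequalities with a bare ``it follows'' rather than spelling out the parity argument you give.
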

\begin{proof}
Observe that the operations \eqref{eq:operation-integrality} do not affect inequalities in \eqref{eq:core-primal.c1} except at the end vertices of $T$.
Since the two end vertices $v_{1}$ and $v_{l+1}$ of $T$ have odd degree in $G'$, it follows that  $0 < \sum_{e\in \delta (v_1)} x_{e}<b_{v_1}$ and $0 < \sum_{e\in \delta (v_{l+1})} x_{e}<b_{v_{l+1}}$.
Thus $\boldsymbol{x}'$ is feasible to LP \eqref{eq:core-primal}.
By the principle of complementary slackness, \eqref{eq:complementary-lackness.edge} 
implies that $y_{v_i} +y_{v_{i+1}} = w_{e_i}$ as $x_{e_i}>0$ for $i=1,\ldots, l$, and \eqref{eq:complementary-lackness.vertex} implies that $y_{v_1}=y_{v_{l+1}}=0$ as $\sum_{e\in \delta (v_1)} x_{e}<b_{v_1}$ and $\sum_{e\in \delta (v_{l+1})} x_{e}<b_{v_{l+1}}$.
It follows that
\begin{equation}\label{eq:odd.vertex.elimination}
\begin{split}
      w(T, \boldsymbol{x}')
      = & ~ \sum_{i=1}^{l} (y_{v_i}+y_{v_{i+1}}) x'_{e_i} \\
      = & ~ \sum_{i=2}^{l} y_{v_i} (x'_{e_{i-1}}+x'_{e_{i}})\\
      = & ~ \sum_{i=2}^{l} y_{v_i} (x_{e_{i-1}}+x_{e_{i}}) \\
      = & ~ \sum_{i=1}^{l} (y_{v_i}+y_{v_{i+1}}) x_{e_i} \\
      = & ~ w(T,\boldsymbol{x}).
\end{split}
\end{equation}
\end{proof}

\begin{lemma}\label{thm:circuit.operation}
Let $T$ be a closed trail in $G'$.
For any vertex $v$ in $T$, let $\boldsymbol{x}'$ be obtained from $\boldsymbol{x}$ by performing the operations \eqref{eq:operation-integrality} along $T$ starting with $v$.
Then $\boldsymbol{x}'$ is feasible to LP \eqref{eq:core-primal}.
Moreover,
\begin{enumerate}[label={\emph{($\roman*$)}}]
	\item \label{itm:even.circuit.operation} if $l$ is even, then $w(T,\boldsymbol{x}')=w(T,\boldsymbol{x})$;
	\item \label{itm:odd.circuit.operation} if $l$ is odd, then $w(T,\boldsymbol{x}')=w(T,\boldsymbol{x})-y_{v}$.
\end{enumerate}
\end{lemma}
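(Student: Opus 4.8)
The plan is to mirror the proof of Lemma \ref{thm:odd.vertex.elimination}, first verifying feasibility of $\boldsymbol{x}'$ and then computing the weight change through complementary slackness and a regrouping by vertices; the only genuinely new ingredient is the treatment of the start vertex $v$, where the trail closes up on itself.

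First I would check feasibility. Since $T$ is a trail it uses each edge at most once, so every $x_{e_i}$ is altered by exactly $\pm\frac12$; as each $e_i\in E'$ satisfies $x_{e_i}\ge\frac12$, nonnegativity \eqref{eq:core-primal.c2} is preserved. For the capacity constraints \eqref{eq:core-primal.c1} I would track the net change of $\sum_{e\in\delta(u)}x_e$ at each vertex $u$. At every internal occurrence $v_i$ ($2\le i\le l$) the two incident trail-edges $e_{i-1}$ and $e_i$ receive operations of opposite sign, so the degree sum there is unchanged. The start vertex $v=v_1=v_{l+1}$ is incident to the wrap-around pair $e_1$ and $e_l$: here $e_1$ always loses $\frac12$, while $e_l$ gains $\frac12$ when $l$ is even and loses $\frac12$ when $l$ is odd. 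Thus the degree sum at $v$ is unchanged for even $l$ and drops by $1$ for odd $l$; in neither case can a capacity constraint be violated, so $\boldsymbol{x}'$ is feasible.

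For the weight identity I would argue exactly as in \eqref{eq:odd.vertex.elimination}. Because each $e_i$ lies in $G'$ we have $x_{e_i}>0$, so complementary slackness \eqref{eq:complementary-lackness.edge} gives $w_{e_i}=y_{v_i}+y_{v_{i+1}}$ for every $i$. Writing both $w(T,\boldsymbol{x})$ and $w(T,\boldsymbol{x}')$ in the form $\sum_{i=1}^l(y_{v_i}+y_{v_{i+1}})x_{e_i}$ and regrouping by vertices, each internal vertex $v_i$ contributes $y_{v_i}(x'_{e_{i-1}}+x'_{e_i})$, and since consecutive edges receive opposite operations this equals $y_{v_i}(x_{e_{i-1}}+x_{e_i})$, matching $\boldsymbol{x}$. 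The only term that can change is the one at the start vertex, $y_v(x'_{e_1}+x'_{e_l})$: for even $l$ the operations on $e_1$ and $e_l$ cancel, so $x'_{e_1}+x'_{e_l}=x_{e_1}+x_{e_l}$ and we obtain \ref{itm:even.circuit.operation}; for odd $l$ both edges lose $\frac12$, so $x'_{e_1}+x'_{e_l}=x_{e_1}+x_{e_l}-1$, which removes exactly $y_v$ and yields \ref{itm:odd.circuit.operation}.

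The hard part, and the real difference from Lemma \ref{thm:odd.vertex.elimination}, is that in a closed trail the two ends coincide at $v$, so one cannot invoke \eqref{eq:complementary-lackness.vertex} to force $y_v=0$; instead the endpoint $y$-value survives, and the parity of $l$ decides whether its contribution cancels (even) or is lost once (odd). The subtle point I would be careful to verify is that repeated occurrences of $v$ in the interior of $T$ are absorbed into the internal-vertex computation and contribute nothing extra, so that the net effect at $v$ is governed solely by the wrap-around pair $(e_1,e_l)$.
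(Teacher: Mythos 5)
Your proof is correct and follows essentially the same route as the paper's: feasibility via the observation that the operations cancel at every interior occurrence and can only decrease the degree sum at the start vertex, then the weight identity via complementary slackness and regrouping the sum $\sum_i (y_{v_i}+y_{v_{i+1}})x'_{e_i}$ by vertices, with the parity of $l$ deciding whether the wrap-around pair $(e_1,e_l)$ cancels or loses $1$. Your explicit treatment of nonnegativity and of repeated interior occurrences of $v$ is slightly more careful than the paper's write-up, but the argument is the same.
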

\begin{proof}
Without loss of generality, assume that we perform the operations \eqref{eq:operation-integrality} from $v_1$ along $T$.
Notice that the operations \eqref{eq:operation-integrality} do not affect inequalities in \eqref{eq:core-primal.c1} except for $v_1$.
Since $\sum_{e\in \delta (v_1)} x_e \geq \sum_{e\in \delta (v_1)} x'_e \geq \sum_{e\in \delta (v_1)} x_e -1$,
$\boldsymbol{x}'$ is feasible to LP \eqref{eq:core-primal}.
Moreover,
\begin{equation}\label{eq:circuit.operation}
	\begin{split}
		  w(T,\boldsymbol{x}') 
		  = & ~ \sum_{i=1}^{l} w_{e_i} x'_{e_i}\\
		  = & ~ (y_{v_1}+y_{v_{2}}) x'_{e_1} + (y_{v_2}+y_{v_{3}}) x'_{e_2} + \ldots + (y_{v_l}+y_{v_{1}}) x'_{e_l} \\
		  = & ~ y_{v_1} (x'_{e_l}+ x'_{e_1}) + y_{v_2} (x'_{e_1}+ x'_{e_2}) + \ldots + y_{v_l} (x'_{e_{l-1}}+ x'_{e_l}).
	\end{split}
\end{equation}
We proceed by distinguishing the parity of $l$.

\emph{\ref{itm:even.circuit.operation}}
If $l$ is even, then \eqref{eq:circuit.operation} implies that
\begin{equation}\label{eq:even.circuit.operation}
	\begin{split}
		  w(T,\boldsymbol{x}') 
		  = & ~ y_{v_1} (x_{e_l}+ x_{e_1}) + y_{v_2} (x_{e_1}+ x_{e_2}) + \ldots + y_{v_l} (x_{e_{l-1}}+ x_{e_l}) \\
		  = & ~ (y_{v_1}+y_{v_{2}}) x_{e_1} + (y_{v_2}+y_{v_{3}}) x_{e_2} + \ldots + (y_{v_l}+y_{v_{1}}) x_{e_l} \\
		  = & ~ \sum_{i=1}^{l} w_{e_i} x_{e_i} \\
		  = & ~ w(T,\boldsymbol{x}).
	\end{split}
\end{equation}

\emph{\ref{itm:odd.circuit.operation}}
If $l$ is odd, then \eqref{eq:circuit.operation} implies that
\begin{equation}\label{eq:odd.circuit.operation}
	\begin{split}
		  w(T,\boldsymbol{x}') 
		  = & ~ y_{v_1} (x_{e_l}+ x_{e_1} - 1) + y_{v_2} (x_{e_1}+ x_{e_2}) + \ldots + y_{v_l} (x_{e_{l-1}}+ x_{e_l}) \\
		  = & ~ (y_{v_1}+y_{v_{2}}) x_{e_1} + (y_{v_2}+y_{v_{3}}) x_{e_2} + \ldots + (y_{v_l}+y_{v_{1}}) x_{e_l} - y_{v_1} \\
		  = & ~ \sum_{i=1}^{l} w_{e_i} x_{e_i} - y_{v_1} \\
		  = & ~ w(T,\boldsymbol{x}) - y_{v_1}.
	\end{split}
\end{equation}
\end{proof}

Since odd degree vertices occur in pairs,
Lemma \ref{thm:odd.vertex.elimination} suggests that we may remove all odd degree vertices in $G'$ by the operations \eqref{eq:operation-integrality} without violating the optimality of $\boldsymbol{x}$.
Lemma \ref{thm:circuit.operation} suggests that we may further remove all even closed trails in $G'$ by the operations \eqref{eq:operation-integrality} without violating the optimality of $\boldsymbol{x}$.
Therefore, we may assume that $G'$ has no odd degree vertices and no even closed trails, which implies that $G'$ consists of vertex disjoint odd cycles.

\subsection{Constructing approximate core allocations from dual optimal solutions}
Recall that $\boldsymbol{x}$ is a half-integral optimal solution to LP \eqref{eq:core-primal} and $\boldsymbol{y}$ is an optimal solution to LP \eqref{eq:core-dual}.
Our mechanism constructs an approximate core allocation of $\Gamma_G$ from $\boldsymbol{y}$ by referring to $\boldsymbol{x}$.
Let $G'=(V',E')$ be the subgraph of $G$ induced by non-integral edges of $\boldsymbol{x}$.
Without loss of generality, we assume that $G'$ consists of vertex disjoint odd cycles.
To construct an approximate core allocation, 
we first round $\boldsymbol{x}$ into an integral feasible solution such that the rounding loss is minimized in every odd cycle,
then define an approximate core allocation from $\boldsymbol{y}$ by equally sharing the rounding loss among players in every odd cycle.
The rounding procedure helps to ensure the validity of our approximate core allocation.
The procedure of minimizing and equally sharing the rounding loss helps to achieve the best factor of our approximate core allocation.

Before proceeding, we introduce some notations for simplicity.
Denote by $\mathcal{C}$ the set of odd cycles in $G'$.
For any odd cycle $C\in \mathcal{C}$, let $y_{C}=\min_{v\in C} y_v$, $b_{C} = \min_{v\in C} b_{v}$, and $l_C$ denote the length of $C$.
Let $l'=\min_{C\in \mathcal{C}} l_{C}$ and $b' = \min_{C\in \mathcal{C}} b_C$.
In the following,
we summarize our mechanism as Mechanism \ref{alg:approx.core} and prove the approximation ratio of our mechanism in Lemma \ref{thm:approx.ratio} and Theorem \ref{thm:main-parametric}.

\begin{figure}
	\begin{wbox}
		\begin{alg}
		\label{alg:approx.core}
		{\bf (Approximate Core Allocation Mechanism)}\\
		\begin{enumerate}
			\item Compute $\boldsymbol{x}$ and $\boldsymbol{y}$, optimal solutions to LPs \eqref{eq:core-primal} and \eqref{eq:core-dual},  where $\boldsymbol{x}$ is half-integral.
			\item Modify $\boldsymbol{x}$ so that all non-integral edges form vertex disjoint odd cycles.
			\item $\forall i \in V$, 
			\begin{equation*}\label{eq:mech}
    				a_i =
    					\begin{cases*}
      						{(1 -\frac{1}{l_{C} \cdot b_{i}}) \cdot b_{i} y_{i}} & if $i$ is in a non-integral odd cycle  $C$,  \\
      						{b_{i} y_{i}}				  & otherwise.
    					\end{cases*}
  			\end{equation*}
		\end{enumerate} 
		Output $\boldsymbol{a}=(a_i)_{i\in V}$.
		\end{alg}
	\end{wbox}
\end{figure}

\begin{lemma}\label{thm:approx.ratio}
The allocation computed by Mechanism \ref{alg:approx.core} lies in the $(1 - \frac{1}{l' \cdot b'})$-approximate core of $\Gamma_G=(N,\gamma)$. 
\end{lemma}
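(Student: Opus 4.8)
The plan is to verify the two defining inequalities of the $(1-\frac{1}{l'\cdot b'})$-approximate core separately: the budget condition $\sum_{i\in N}a_i\le\gamma(N)$ and the coalitional condition $\sum_{i\in S}a_i\ge(1-\frac{1}{l'\cdot b'})\gamma(S)$ for every $S\subseteq N$. I would dispose of the coalitional condition first, since it reduces to a pointwise comparison. For a vertex $i$ lying in a non-integral odd cycle $C$ one has $l_C\ge l'$ and $b_i\ge b_C\ge b'$, hence $l_C\cdot b_i\ge l'\cdot b'$ and therefore $a_i=(1-\frac{1}{l_C\cdot b_i})b_iy_i\ge(1-\frac{1}{l'\cdot b'})b_iy_i$; for every other vertex $a_i=b_iy_i\ge(1-\frac{1}{l'\cdot b'})b_iy_i$ trivially (non-negativity of $\boldsymbol{a}$ is immediate, since $l_C\ge3$ and $b_i\ge1$ force the factor $1-\frac{1}{l_C\cdot b_i}$ to be positive). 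Summing this pointwise bound over $S$ and invoking the dual-feasibility computation already recorded in the preliminaries, namely $\gamma(S)=\sum_{ij\in G[S]}w_{ij}z_{ij}\le\sum_{ij\in G[S]}(y_i+y_j)z_{ij}\le\sum_{i\in S}b_iy_i$ for a maximum $b$-matching $\boldsymbol{z}$ of $G[S]$, yields $\sum_{i\in S}a_i\ge(1-\frac{1}{l'\cdot b'})\sum_{i\in S}b_iy_i\ge(1-\frac{1}{l'\cdot b'})\gamma(S)$.

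The budget condition is where the real work lies, and it is the step I expect to be the main obstacle, because it requires certifying that the total allocation does not exceed the integral optimum $\gamma(N)$ even though $\boldsymbol{y}$ only certifies the larger fractional optimum $b(G,\boldsymbol{y})=w(G,\boldsymbol{x})$. I would first rewrite the total allocation by collecting the per-player deductions, using $a_i=b_iy_i-\frac{y_i}{l_C}$ on each cycle together with Lemma \ref{thm:half.integral.polytope}: $\sum_{i\in N}a_i=\sum_{i\in N}b_iy_i-\sum_{C\in\mathcal{C}}\frac{1}{l_C}\sum_{i\in C}y_i=w(G,\boldsymbol{x})-\sum_{C\in\mathcal{C}}\frac{1}{l_C}\sum_{i\in C}y_i$. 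The crux is then to exhibit a genuine integral $b$-matching whose weight dominates this quantity. I would obtain such a matching $\boldsymbol{x}'$ by applying the odd-closed-trail operation of Lemma \ref{thm:circuit.operation}\ref{itm:odd.circuit.operation} to each cycle $C$, starting the operations at a vertex attaining $y_C=\min_{v\in C}y_v$; since all cycle edges are half-integral, the operation turns them integral, preserves feasibility, and decreases the weight on $C$ by exactly $y_C$. As the cycles are vertex disjoint these roundings do not interfere, so $\boldsymbol{x}'$ is a feasible integral $b$-matching with $w(G,\boldsymbol{x}')=w(G,\boldsymbol{x})-\sum_{C\in\mathcal{C}}y_C$.

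It then remains to compare the allocation's deduction against the rounding loss cycle by cycle. Because $y_i\ge y_C$ for every $i\in C$, one has $\frac{1}{l_C}\sum_{i\in C}y_i\ge\frac{1}{l_C}\cdot l_C\cdot y_C=y_C$, so the allocation removes at least as much as the rounding does on every cycle. Combining the two displays gives $\sum_{i\in N}a_i=w(G,\boldsymbol{x})-\sum_{C\in\mathcal{C}}\frac{1}{l_C}\sum_{i\in C}y_i\le w(G,\boldsymbol{x})-\sum_{C\in\mathcal{C}}y_C=w(G,\boldsymbol{x}')\le\gamma(N)$, where the last inequality holds because $\boldsymbol{x}'$ is a feasible integral $b$-matching of $G$ and $\gamma(N)$ is the maximum weight of such a matching. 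This closes the budget condition and completes the argument. The delicate point throughout is that the deduction factor $\frac{1}{l_C\cdot b_i}$ is calibrated precisely so that the per-player deductions aggregate to cover the worst-case rounding loss $y_C$ of an odd cycle while still leaving every coalition at least the claimed fraction, and I would keep an eye on this balance when filling in the details.
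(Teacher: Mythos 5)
Your proposal is correct and follows essentially the same route as the paper: both construct the integral $b$-matching $\boldsymbol{x}'$ by applying the odd-closed-trail operation of Lemma \ref{thm:circuit.operation}\ref{itm:odd.circuit.operation} to each cycle starting at a vertex attaining $y_C$, bound $\gamma(N)$ below by $w(G,\boldsymbol{x}')=w(G,\boldsymbol{x})-\sum_{C\in\mathcal{C}}y_C$, compare this loss with the per-cycle deduction $\frac{1}{l_C}\sum_{i\in C}y_i\ge y_C$, and handle coalitions via dual feasibility exactly as in \eqref{eq:best.ratio}. The only difference is presentational (you run the chain of inequalities from $\sum_i a_i$ up to $\gamma(N)$ rather than the reverse), so nothing further is needed.
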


\begin{proof}
Let $\boldsymbol{x}'$ be a $b$-matching obtained from $\boldsymbol{x}$ by performing the operations \eqref{eq:operation-integrality} along every non-integral odd cycle $C\in \mathcal{C}$ from a vertex $v$ on $C$ with $y_{v}= y_{C}$.
For the grand coalition $N$, we have
\begin{equation}\label{eq:c21}
	\begin{split}
		  \gamma (N)
		  \geq & ~ w(G,\boldsymbol{x}')\\
		  = & ~ w(G,\boldsymbol{x})-\sum_{C\in \mathcal{C}} y_C\\
		  = & ~ b(G,\boldsymbol{y}) - \sum_{C\in \mathcal{C}} y_C\\
		  = & ~ \sum_{i\in V'} b_{i} y_{i} + \sum_{i\in V\backslash V'} b_{i} y_{i} - \sum_{C\in \mathcal{C}} y_C\\
		  = & ~ \sum_{C\in \mathcal{C}} \sum_{i\in C} ( b_{i} y_{i} - \frac{y_{C}}{l_{C}} )+ \sum_{i\in V\backslash V'} b_{i} y_{i}\\
		  = & ~ \sum_{C\in \mathcal{C}} \sum_{i\in C} ( 1 -  \frac{y_{C}}{y_i} \cdot \frac{1}{l_{C} \cdot b_{i}} ) b_{i} y_{i} + \sum_{i\in V\backslash V'} b_{i} y_{i}\\
		  \geq & ~ \sum_{i\in N} a_{i}.
	\end{split}
\end{equation}
For any coalition $S\subseteq N$ and any maximum $b$-matching $\boldsymbol{z}$ in the induced subgraph $G[S]$, we have
\begin{equation}
\label{eq:best.ratio}
\gamma (S)=\sum_{ij\in G[S]} w_{ij} z_{ij} \leq \sum_{ij\in G[S]} (y_i + y_j) z_{ij}\leq \sum_{i\in S} y_i b_i \leq \frac{1}{1-\frac{1}{l' \cdot b'}}\sum_{i\in S} a_i.
\end{equation}
\end{proof}

Notice that $l'$ and $b'$ in Lemma \ref{thm:approx.ratio} depend on the initial half-integral optimal solution $\boldsymbol{x}$ and odd cycles induced by $\boldsymbol{x}$.
We reformulate Lemma \ref{thm:approx.ratio} as follows.

\begin{theorem}\label{thm:main-parametric}
The allocation computed by Mechanism \ref{alg:approx.core} lies in the $(1 - \frac{1}{l_{\min} \cdot b_{\min}})$-approximate core of the multiple partners matching game on $G$,
where $l_{\min}$ is the minimum length of odd cycles and $b_{\min}$ is the minimum capacity of vertices.
\end{theorem}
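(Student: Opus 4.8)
The plan is to derive Theorem \ref{thm:main-parametric} from Lemma \ref{thm:approx.ratio} by replacing the solution-dependent parameters $l'$ and $b'$ with the instance-independent quantities $l_{\min}$ and $b_{\min}$. The engine of this reduction is the monotonicity of the approximate core: if an allocation $\boldsymbol{a}$ satisfies $\sum_{i\in S} a_i \geq \alpha\cdot\gamma(S)$ for every $S\subseteq N$, then since $\gamma(S)\geq 0$ (the empty $b$-matching is always feasible in $G[S]$), it also satisfies $\sum_{i\in S} a_i \geq \beta\cdot\gamma(S)$ for every $\beta\leq\alpha$, while the budget constraint $\sum_{i\in N} a_i\leq\gamma(N)$ is untouched. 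Hence it suffices to prove $1-\frac{1}{l'\cdot b'}\geq 1-\frac{1}{l_{\min}\cdot b_{\min}}$, equivalently $l'\cdot b'\geq l_{\min}\cdot b_{\min}$, and then invoke Lemma \ref{thm:approx.ratio} together with monotonicity.

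The proof then splits into two elementary inequalities. First I would bound $l'$ from below. Recall that $G'=(V',E')$ is the subgraph of $G$ induced by the non-integral edges of $\boldsymbol{x}$, that $\mathcal{C}$ is its set of vertex-disjoint odd cycles, and that $l'=\min_{C\in\mathcal{C}} l_C$. Since $G'$ is a subgraph of $G$, every $C\in\mathcal{C}$ is an odd cycle of $G$, so the collection of odd cycles of $G'$ is contained in the collection of all odd cycles of $G$; taking the minimum length over the smaller collection can only increase the value, giving $l'\geq l_{\min}$. Next I would bound $b'$ the same way: by definition $b'=\min_{C\in\mathcal{C}} b_C=\min_{C\in\mathcal{C}}\min_{v\in C} b_v$ is the least vertex capacity over the set of vertices lying on some non-integral odd cycle, a subset of $V$, whence $b'\geq\min_{v\in V} b_v=b_{\min}$. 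Multiplying the two bounds yields $l'\cdot b'\geq l_{\min}\cdot b_{\min}$, which is exactly what the reduction requires.

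I do not expect a genuine obstacle here, as the statement is a reformulation that trades the instance-specific constants of Lemma \ref{thm:approx.ratio} for uniform ones; the only subtlety worth addressing explicitly is the degenerate case in which $G$, and hence $G'$, contains no odd cycle. Then $\mathcal{C}=\emptyset$, the rounding step is vacuous, every $a_i=b_i y_i$, and $\boldsymbol{a}$ is an exact core allocation; under the natural convention $l_{\min}=\infty$ one has $1-\frac{1}{l_{\min}\cdot b_{\min}}=1$, so the theorem holds in this case as well.
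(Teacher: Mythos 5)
Your proposal is correct and follows essentially the same route as the paper, which presents Theorem \ref{thm:main-parametric} as a direct reformulation of Lemma \ref{thm:approx.ratio}: since every odd cycle of $G'$ is an odd cycle of $G$ and every vertex of $G'$ is a vertex of $G$, one has $l'\geq l_{\min}$ and $b'\geq b_{\min}$, and monotonicity of the approximate core (via $\gamma(S)\geq 0$) does the rest. Your explicit treatment of the degenerate case $\mathcal{C}=\emptyset$ is a sensible addition that the paper leaves implicit.
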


Mechanism \ref{alg:approx.core} can be modified by arbitrarily sharing the rounding loss among players in every odd cycle, which also yields valid approximate core allocations.
However, according to the last inequality in \eqref{eq:best.ratio}, the approximation ratio is better than any other case when the rounding loss is shared equally among players in every odd cycle.
In the following, we provide a tight example that suggests that the approximation ratio in Theorem \ref{thm:main-parametric} cannot be improved relative to the underlying LP-relaxation.

\begin{example}\label{example}
Consider the infinite family of graphs defined as follows.
Let $G_{n}$ be a graph consisting of $2n$ vertex disjoint odd cycles of length $l$, where $n$ is an integer.
Every vertex shares the same odd integral capacity $b$,
and every edge shares the same unit weight.
It is easy to verify that $\boldsymbol{x}$ with $x_e  = b/2$ for every edge $e$ is an optimal primal solution and $\boldsymbol{y}$ with $y_v = 1/2$ for every vertex $v$ is an optimal dual solution to the natural LP-relaxation of the maximum $b$-matching problem on $G_n$.
The maximum weight of $b$-matchings in $G_n$ is 
\begin{equation*}
\frac{l b-1}{2}\cdot 2n=n (l b -1).
\end{equation*}
Meanwhile, Mechanism \ref{alg:approx.core} allocates
\begin{equation*}
(1-\frac{1}{l b}) \cdot b \cdot \frac{1}{2}=\frac{l b-1}{2 l}
\end{equation*}
to each vertex in $G_n$,
the sum of which is precisely the maximum weight of $b$-matchings in $G_n$.
In case a connected graph is required, we may choose one vertex from each odd cycle and connect every pair of chosen vertices with an edge of weight $\epsilon$, where $\epsilon$ tends to zero.
\end{example}

The \emph{integrality gap} of a problem is the worst-case ratio over all instances of the optimal value of the problem to the optimal value of the natural LP-relaxation.

\begin{theorem}\label{thm:main-integrality.gap}
The integrality gap of the maximum $b$-matching problem relative to the natural LP-relaxation is $\frac{2}{3}$.
\end{theorem}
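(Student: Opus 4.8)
The plan is to establish two matching bounds that together pin the integrality gap at $\frac{2}{3}$. Write $\mathrm{OPT}_{\mathrm{IP}}=\gamma(N)$ for the weight of a maximum (integral) $b$-matching and $\mathrm{OPT}_{\mathrm{LP}}=w(G,\boldsymbol{x})$ for the optimum of LP \eqref{eq:core-primal}. Since every integral $b$-matching is feasible for the relaxation, $\mathrm{OPT}_{\mathrm{IP}}\le \mathrm{OPT}_{\mathrm{LP}}$, so the integrality gap is the infimum of $\mathrm{OPT}_{\mathrm{IP}}/\mathrm{OPT}_{\mathrm{LP}}$ over all instances, and it suffices to show this infimum equals $\frac{2}{3}$.

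For the lower bound $\mathrm{OPT}_{\mathrm{IP}}/\mathrm{OPT}_{\mathrm{LP}}\ge \frac{2}{3}$ on every instance, I would reuse the rounding analysis already carried out in the proof of Lemma \ref{thm:approx.ratio}. Starting from the half-integral optimum $\boldsymbol{x}$ whose non-integral edges form the vertex-disjoint odd cycles $\mathcal{C}$, the integral $b$-matching $\boldsymbol{x}'$ produced by the operations \eqref{eq:operation-integrality} satisfies, by the first lines of \eqref{eq:c21}, $\mathrm{OPT}_{\mathrm{IP}}\ge w(G,\boldsymbol{x}')=\mathrm{OPT}_{\mathrm{LP}}-\sum_{C\in\mathcal{C}}y_C$, where the per-cycle loss $y_C$ is exactly the quantity identified in Lemma \ref{thm:circuit.operation}\ref{itm:odd.circuit.operation}. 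The key step is then to bound each $y_C$ against the LP-weight that cycle carries: complementary slackness \eqref{eq:complementary-lackness.edge} gives $w_{e}=y_i+y_j$ on every edge $e=ij$ of $C$ (as $x_e=\tfrac12>0$), so summing over the cycle yields $w(C,\boldsymbol{x})=\tfrac12\sum_{e\in C}w_e=\sum_{v\in C}y_v\ge l_C\,y_C$, i.e. $y_C\le w(C,\boldsymbol{x})/l_C$.

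Because the odd cycles are vertex-disjoint and each has odd length $l_C\ge 3$, I can sum this estimate to obtain $\sum_{C\in\mathcal{C}}y_C\le \frac{1}{3}\sum_{C\in\mathcal{C}}w(C,\boldsymbol{x})\le \frac{1}{3}\,w(G,\boldsymbol{x})=\frac{1}{3}\,\mathrm{OPT}_{\mathrm{LP}}$, whence $\mathrm{OPT}_{\mathrm{IP}}\ge \frac{2}{3}\,\mathrm{OPT}_{\mathrm{LP}}$ for every $G$. For the matching upper bound I would invoke Example \ref{example} in its simplest incarnation, a single triangle ($l=3$) with unit capacities ($b=1$) and unit edge weights: there $\mathrm{OPT}_{\mathrm{LP}}=\tfrac32$ (each edge set to $\tfrac12$) while $\mathrm{OPT}_{\mathrm{IP}}=1$, so the ratio is exactly $1-\frac{1}{lb}=\frac{2}{3}$, showing the infimum cannot exceed $\frac{2}{3}$. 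Combining the two bounds gives the claim.

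The part needing the most care is the lower bound, specifically the reduction of the total rounding loss $\sum_{C}y_C$ to a $\frac{1}{3}$-fraction of $\mathrm{OPT}_{\mathrm{LP}}$. Its correctness hinges on three facts that must all hold simultaneously: that the loss on each odd cycle is precisely $y_C$ rather than something larger (Lemma \ref{thm:circuit.operation}\ref{itm:odd.circuit.operation}), that complementary slackness applies edge-by-edge on the cycles to convert edge weights into vertex potentials, and that the cycles are vertex-disjoint so that $\sum_{C}w(C,\boldsymbol{x})$ does not overcount and stays below $w(G,\boldsymbol{x})$. The factor $\frac{1}{3}$ is then forced purely by the parity bound $l_C\ge 3$; note that, unlike the parametric statement of Theorem \ref{thm:main-parametric}, the capacities $b$ play no role in the worst case, since the triangle already attains the bound with $b=1$.
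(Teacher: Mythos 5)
Your proof is correct, and its upper-bound half is the paper's own (the paper sets $l=3$, $b=1$ in Example \ref{example}; your single triangle is the same instance with $n=\frac12$, so to speak). The lower bound, however, follows a genuinely different route. The paper gets it in two lines by invoking the mechanism's guarantee: by Lemma \ref{thm:approx.ratio}, $w(G,\boldsymbol{x}^*) \geq \sum_{i\in N} a_i \geq \bigl(1-\frac{1}{l_{\min}\cdot b_{\min}}\bigr)\, b(G,\boldsymbol{y}) = \bigl(1-\frac{1}{l_{\min}\cdot b_{\min}}\bigr)\, w(G,\boldsymbol{x})$, i.e.\ it compares the rounding loss against the \emph{dual} objective restricted to each cycle, which also yields the sharper parametric factor for free. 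You instead bypass the allocation entirely and bound the loss against the \emph{primal} LP weight carried by each cycle: $y_C \leq \frac{1}{l_C}\sum_{v\in C} y_v \leq \frac{1}{l_C}\, w(C,\boldsymbol{x})$, via complementary slackness and $x_e \geq \frac12$ on cycle edges, then sum over the vertex-disjoint cycles. This buys a self-contained argument that makes explicit why capacities are irrelevant to the worst case; what it gives up is the parametric strength, since your route only yields $1-\frac{1}{l_{\min}}$ in general, weaker than the paper's $1-\frac{1}{l_{\min}\cdot b_{\min}}$ when $b_{\min}>1$. Two small imprecisions you should repair, neither fatal: non-integral half-integral edges satisfy $x_e \in \{\frac12,\frac32,\frac52,\dots\}$, not necessarily $x_e=\frac12$, so your chain $w(C,\boldsymbol{x})=\frac12\sum_{e\in C}w_e=\sum_{v\in C}y_v$ should be the inequality $w(C,\boldsymbol{x}) \geq \sum_{v\in C} y_v$, which is the direction you need and holds because $w_e = y_i+y_j \geq 0$ on cycle edges; and the step $\sum_{C\in\mathcal{C}} w(C,\boldsymbol{x}) \leq w(G,\boldsymbol{x})$ needs the off-cycle edges to contribute non-negatively, which follows from complementary slackness \eqref{eq:complementary-lackness.edge} together with $\boldsymbol{y}\geq 0$ (every edge with $x_e>0$ has $w_e = y_i + y_j \geq 0$), not from vertex-disjointness alone.
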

\begin{proof}
Let $l_{\min}$ be the minimum length of odd cycles and $b_{\min}$ be the minimum vertex capacity.
Let $\boldsymbol{x}^*$ be a maximum $b$-matching and $\boldsymbol{a}$ be an allocation computed by Mechanism \ref{alg:approx.core}.
Then
\begin{equation}
w(G,\boldsymbol{x}^*) \geq \sum_{i\in N} a_{i} \geq (1 - \frac{1}{l_{\min} \cdot b_{\min}}) \cdot b(G,\boldsymbol{y}) = (1 - \frac{1}{l_{\min} \cdot b_{\min}}) \cdot w(G,\boldsymbol{x}),
\end{equation}
implying $\frac{w(G,\boldsymbol{x}^*)}{w(G,\boldsymbol{x})} \geq \frac{2}{3}$.
This provides a lower bound for the integrality gap of LP-relaxation \eqref{eq:core-primal}.

An upper bound on the integrality gap of LP-relaxation \eqref{eq:core-primal} can be placed by setting $l=3$ and $b=1$ in Example \ref{example}.
In this case, the maximum weight of $b$-matchings is $2n$ and the optimal value to LP-relaxation \eqref{eq:core-primal} is $3n$.
Thus there are infinitely many graphs achieving $\frac{2}{3}$.
\end{proof}

Theorem \ref{thm:main-parametric} guarantees that Mechanism \ref{alg:approx.core} always computes an allocation in the $\frac{2}{3}$-approximate core for any multiple partners matching game.
Theorem \ref{thm:main-integrality.gap} implies that the factor cannot be better than $\frac{2}{3}$,
since the factor is obtained by comparing the value of a primal solution against the value of a dual solution, which can never be better than the integrality gap.

\begin{theorem}\label{thm:main-intro}
The $\frac{2}{3}$-approximate core of multiple partners matching games is always non-empty.
Moreover, $\frac{2}{3}$ is the best possible factor relative to the underlying LP-relaxation.
\end{theorem}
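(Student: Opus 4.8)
The plan is to obtain Theorem \ref{thm:main-intro} as a synthesis of the two principal results already established, namely Theorem \ref{thm:main-parametric} for the positive direction and Theorem \ref{thm:main-integrality.gap} for the optimality direction. For non-emptiness, I would simply invoke Theorem \ref{thm:main-parametric}, which guarantees that Mechanism \ref{alg:approx.core} always outputs an allocation in the $(1 - \frac{1}{l_{\min} \cdot b_{\min}})$-approximate core. The decisive observation is that every odd cycle has length at least $3$, so $l_{\min} \geq 3$, while every vertex capacity is a positive integer, so $b_{\min} \geq 1$; hence $l_{\min} \cdot b_{\min} \geq 3$ and consequently $1 - \frac{1}{l_{\min} \cdot b_{\min}} \geq \frac{2}{3}$. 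Since $\gamma(S) \geq 0$ for every coalition $S$, an allocation lying in the $\alpha$-approximate core automatically lies in the $\alpha'$-approximate core for every $\alpha' \leq \alpha$, so the mechanism's output lies in the $\frac{2}{3}$-approximate core. This yields non-emptiness with no further work.

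For the best-possible claim, the plan is to argue that no LP-duality-based certificate can guarantee a factor exceeding the integrality gap, which Theorem \ref{thm:main-integrality.gap} pins at exactly $\frac{2}{3}$. The reasoning is that any such mechanism verifies the approximate-core inequality $\sum_{i\in S} a_i \geq \alpha\cdot\gamma(S)$ by bounding $\gamma(S)$ from above through a dual feasible solution, exactly as in the chain \eqref{eq:best.ratio}, while efficiency simultaneously forces $\sum_{i\in N} a_i \leq \gamma(N)$ as in \eqref{eq:c21}. For the grand coalition these two constraints place $\alpha$ between the integral optimum $\gamma(N)$ and the dual (equivalently LP-relaxation) optimum, so the attainable factor is capped by the ratio $\gamma(N)/w(G,\boldsymbol{x})$, i.e. the integrality gap. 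To make this concrete and to witness tightness, I would point to Example \ref{example} with $l=3$ and $b=1$: there the integral optimum equals $2n$ while the LP-optimum equals $3n$, so the ratio is precisely $\frac{2}{3}$, and these instances form an infinite family on which no LP-based factor above $\frac{2}{3}$ can be certified.

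The bulk of the genuine labour lives in the earlier lemmas and in Theorems \ref{thm:main-parametric} and \ref{thm:main-integrality.gap}, so I do not anticipate any computational difficulty here. The one point requiring care is conceptual rather than technical: articulating precisely what ``best possible relative to the underlying LP-relaxation'' means, and making transparent that the $\frac{2}{3}$ upper bound is a structural limitation of any certificate that measures coalition values against the LP value, not merely of Mechanism \ref{alg:approx.core} in particular. I would therefore be careful to phrase the optimality as a statement about the integrality gap governing all LP-based guarantees, rather than as an unconditional impossibility result, since an allocation could conceivably beat $\frac{2}{3}$ only by abandoning the LP-relaxation as its yardstick.
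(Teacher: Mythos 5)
Your proposal is correct and follows essentially the same route as the paper: non-emptiness comes from Theorem \ref{thm:main-parametric} together with the observations $l_{\min}\geq 3$ and $b_{\min}\geq 1$, and the optimality claim comes from Theorem \ref{thm:main-integrality.gap} via Example \ref{example} (with $l=3$, $b=1$), interpreting ``best possible'' as the statement that any guarantee certified by comparing a primal value against a dual value is capped by the integrality gap. Your added care in spelling out the monotonicity of the $\alpha$-approximate core and the precise meaning of ``relative to the underlying LP-relaxation'' matches the paper's intent exactly.
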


\section{Concluding remarks}

As Tutte \cite{Tutt54} observed, there is a natural reduction from $b$-matchings to matchings by duplicating every vertex $v$ into $b_v$ copies and every edge $uv$ into $b_u b_v$ copies.
Since every coalition in a multiple partners matching game corresponds to a coalition in the resulting matching game with the same value,
the reduction implies that an approximate core allocation in the latter is also an approximate core allocation in the former.
The reduction also suggests that a mechanism for approximate core allocations can be generalized from matching games to multiple partners matching games and achieve the same approximate ratio in both settings.
However, the reduction causes an exponential increase in the instance size.
Consequently, the mechanism for matching games has a slow pseudo-polynomial runtime on multiple partners matching games.
In contrast, Mechanism \ref{alg:approx.core} is polynomial in the size of $b$-values and has a better approximation ratio.
Besides, we provide a higher point of view for interpreting the duality principle and the rounding technique behind the LP-based mechanism.


\bibliographystyle{habbrv}
\bibliography{reference}
\end{document}